\newtheorem{theorem}{Theorem}
\newenvironment{proof}[1][\proofname]{\par
  \normalfont
  \topsep6\p@\@plus6\p@ \trivlist
  \item[\hskip\labelsep{\bfseries #1}\@addpunct{.}]\ignorespaces
}{%
  \endtrivlist
}
\newcommand{\proofname}{\it{Proof}}
\def\qed{\hfill{$\square$}}
\begin{document}

\title[Typical Performance of Approximation Algorithms for NP-hard
Problems]{Typical Performance of Approximation Algorithms for NP-hard Problems}

\author{Satoshi Takabe$^1$ and Koji Hukushima$^{1,2}$}
\address{$^1$ Graduate School of Arts and Sciences, The University of Tokyo, 3-8-1 Komaba, Meguro-ku, Tokyo 153-8902, Japan}
\address{$^2$ Center for Materials Research by Information Integration,
National Institute for Materials Science, 1-2-1 Sengen, Tsukuba, Ibaraki 305-0047, Japan}
\ead{s\_takabe@huku.c.u-tokyo.ac.jp}

\begin{abstract}
 Typical performance of approximation algorithms is studied for
 randomized minimum vertex cover problems.  
 A wide class of 
 random graph ensembles 
 characterized by an arbitrary degree distribution is discussed with
 the presentation of a theoretical framework. 
 Herein, three approximation algorithms are examined: 
 linear-programming relaxation, loopy-belief propagation, 
 and a leaf-removal algorithm.
 The former two algorithms 
 are analyzed using a statistical--mechanical technique,
 whereas
 the average-case analysis of the last one is conducted using
 the generating function method. 
These algorithms have a threshold in the typical performance with
 increasing average degree of the random graph, below which they
 find true optimal solutions with high probability. 
 Our study 
 reveals that there exist only three cases 
determined by the order of the typical performance thresholds. 
 In addition, we provide 
 some conditions 
for classification of the graph ensembles 
 and demonstrate explicitly some examples
 for the difference in thresholds. 
\end{abstract}

\pacs{75.10.Nr, 02.60.Pn, 05.20.-y, 89.70.Eg}

\noindent{\it Keywords\/ average-case complexity, LP relaxation, cavity method, random graphs, scale-free networks}%% up to 7

%\submitto{\JSTAT}

%\maketitle
%\ioptwocol

%% for two-column
%% add \ioptwocol and change [12pt] to [10pt]

%% The words table and figure should be written in full and not abbreviaged to tab.
%% and fig. Do not include eeq.f, eequationf etc before an equation number or eref.f
%%ereferencef etc before a reference number.

\section{Introduction}
Evaluating the performance of approximation algorithms for optimization problems has attracted researchers' interests during the last several decades.
 It not only provides guarantees of approximations but also enables us to compare their performance for various cases.
 The approximation performance is roughly classifiable into worst-case performance and average-case performance.
 Although the former is an attractive research field in computer science~\cite{vazirani2013approximation},
  the latter is the issue that is mainly addressed in this paper.
 The average-case performance is based on behavior of the approximation algorithm averaged over randomized inputs of optimization problems.
 Probabilistic analyses of algorithms have been studied in the fields of computer science and probabilistic theory
  (e.g.~\cite{frieze1998probabilistic}).
 Especially for optimization problems defined on graphs, random inputs are regarded as random graphs.
 The average-case performance of graph algorithms is therefore examined on random graph ensembles~\cite{frieze1997algorithmic,coja2006max}.
 Typical behavior of the leaf-removal (LR) algorithm for a minimum vertex covers
 (min-VC), for instance, showing a phase transition with varying
 parameters characterizing the graph ensemble, 
  where its precision, the ratio of approximate values to true optimums, falls drastically ~\cite{bauer2001random}.
 
 Typical properties of optimization problems and their approximations
 are also attractive subjects in the spin-glass theory, which was
 originally developed to investigate spin-glass models with random
 interactions or fields~\cite{mezard1990spin}. 
 The spin-glass theory is then applied to study the average-case properties of optimization problems~\cite{fu1986application}, 
 revealing rich structures of optimization problems and their
 solutions~\cite{mezard2009information}. 
 The spin-glass theory also contributes to the development of an approximation algorithm
 called belief propagation (BP), % As a result, it is revealed that
 which enables the solution even of NP-hard optimization problems, intractable
 problems in their worst cases, %, can be solved by BP
 with high probability over random ensembles under a certain
 condition. %some conditions, which are
 It is particularly interesting that the condition is described using a kind of 
phase transition that is related closely to so-called replica symmetry (RS)
 underlying the optimization problems and its breaking (RSB) transition
 called the RS--RSB transition.
 
Recently, the spin-glass theory and its techniques have also been applied to typical performance analyses of approximation algorithms,
  except for BP.
 The linear-programming (LP) relaxation is a widely used convex relaxation technique for combinatorial optimization.
 A statistical--mechanical analysis reveals that the typical behavior of LP relaxation for the min-VC
  on Erd\"os-R\'enyi random graphs shows a phase transition~\cite{Dewenter2012,takabe2014typical}.
 The transition of typical approximation performance occurs in the same
  condition as that of BP and LR, 
 suggesting that the RS--RSB transition in statistical physics is
  related to typical behavior not only of BP,
  but also of other approximation algorithms.
 Although previous works related to homogeneous random graphs support the
  phase-transition picture, it remains unclear whether it is the case for a
  much wider range of ensembles such as
 random scale-free networks, or not.
 As described herein, we study the typical behavior of those three approximations for min-VC on random graphs
 with arbitrary degree distribution.
 Along with some mathematically rigorous discussion, probabilistic and
  statistical--mechanical analyses %are employed. As a result, we obtain
  yield the condition of random graphs for which three approximations show the same typical performance.
 Moreover, we consider all possible cases of differences in their typical performance and provide examples for these scenarios. 
 
This paper is organized as follows. 
 In the following section, we define min-VC and its three
 approximation algorithms. 
 In \sref{sec_wor}, we compare their worst-case performance for the
 min-VC on an arbitrary graph, % Though this is a worst-case analysis, it
 which is useful to comprehend all possible cases in the average-case
 performance. 
 In \sref{sec_ran}, the typical performance of approximation algorithms is %described. % It is
 defined using a concept of random graphs. We also provide a review of %provide
 some previous works examining
  typical behavior of approximations for the min-VCs on Er\"os-R\'enyi random graphs.
 It is the simplest case but it turned out to be fundamental in many aspects of the approximations, as shown in the next section.
 In \sref{sec_typ}, we present some theoretical typical analyses of these approximations. 
 These results indicate the existence of three cases related to the gap of their typical performance.
 In \sref{sec_exe}, some examples are provided to demonstrate these cases. They are studied using both theoretical and numerical analyses.
 The last section is devoted to a summary and discussion of the results.
 The Appendix presents a detailed RS cavity analysis of the
  statistical--mechanical model of LP relaxation. %the LP--IP model
  is presented.

\section{Definition of vertex cover problem and its approximations}\label{sec_def}

Letting $G$ be an undirected graph without multi-edges and self-loops, we define $V$ and $E$ as a set of vertices and edges with respective cardinality of $N$ and $M$.
We cover vertices in $V$ to include at least one endpoint of each edge in $E$.
This problem is called the vertex cover problem.
Especially, the minimum vertex cover problem requests that one ascertains the minimum assignment of the covering.

The min-VC is represented by the integer programming (IP) problem.
Letting $x_i\in\{0,1\}$ be a variable on vertex $i\in V$ which takes $1$ if vertex $i$ is covered and $0$ otherwise, then, the problem reads as
\begin{eqnarray}
&\mbox{Min.\ \ \ \ \ \ \ \ \ } N^{-1}\sum_{i=1}^Nx_i,\nonumber\\
&\mbox{subject to \ } x_i+x_j\ge 1\quad& \forall (i,j)\in E, \label{eq_2}\\
&\mbox{\ \ \ \ \ \ \ \ \ \ \ \ \ \ \ } x_i\in\{0,1\} &\forall i\in V.\nonumber
\end{eqnarray}
The constraints on edges are represented by an incident matrix
$A_{\mathrm{inc}}=(a_{ai})\in \mathbb{Z}^{M\times N}$ of graph $G$, 
defined by $a_{ai}=1$ if edge $a$ connects to vertex $i$ and by $a_{ai}=0$ otherwise.
In \eref{eq_2}, the cost function is normalized by $N$ to take a $N\rightarrow\infty$ limit later.
We define its optimal value as $x_c^{\mathrm{IP}}(G)$ and simply call it the (true) optimal value of the problem on $G$.
Because the min-VC belongs to a class of NP-hard, it is usually difficult to
estimate the optimal value $x_c^{\mathrm{IP}}(G)$
 rigorously in polynomial time.
Alternatively, several polynomial-time approximation methods are available.
As described herein, we consider three algorithms based on different strategies.

%LP
The first one is called linear-programming relaxation.
This method solves a modified problem by changing a binary constraint $x_i\in\{0,1\}$ to a real one $x_i\in[0,1]$.
The relaxed problem therefore reads as
\begin{eqnarray}
&\mbox{Min.\ \ \ \ \ \ \ \ \ } N^{-1}\sum_{i=1}^Nx_i,\nonumber\\
&\mbox{subject to \ } x_i+x_j\ge 1\quad& \forall (i,j)\in E, \label{eq_3}\\
&\mbox{\ \ \ \ \ \ \ \ \ \ \ \ \ \ \ } x_i\in[0,1] &\forall i\in V.\nonumber
\end{eqnarray}
This change makes the problem tractable.
In general, this relaxation finds lower bounds of the original problem.
In LP relaxation, one can construct a simplex defined by linear constraints.
Because the cost function is also linear, there exist optimal extreme points on the simplex.
Extreme point solutions of LP are therefore important for the study of
LP structures.
The novel property of LP relaxation for min-VCs proven by 
Nemhauser and Trotter is half-integrality~\cite{nemhauser1974properties}:
 every extreme point for LP-relaxed min-VC is half-integral,
  that is, all elements are 0, 1/2, or 1.
From half-integrality, if an LP-relaxed solution has no half-integer elements, then LP relaxation finds a true optimal solution
of the min-VC.
It is also shown that an upper bound of the approximate value is a half.
These properties play a key role in later discussions.

%LR
The second method is the leaf-removal (LR) algorithm introduced by Karp and Sipser~\cite{karp1981maximum}.
This polynomial-time algorithm seeks a local optimum in a part of graph called a leaf.
A leaf is defined as a vertex with degree one.
It is locally optimal to cover $v$ instead of $w$ if vertex $v$ is connected to a leaf $w$.
The LR covers a root of a leaf and delete the covered vertex from the graph at each step.
This step is repeated until no leaves exist in the remains $G_\mathrm{R}$.
It covers all vertices in $G_\mathrm{c}$ to satisfy constraints if there exist connected components $G_\mathrm{c}$ when LR stops.
Results show that this approximation obtains an optimal value in the removed part of the graph.
Therefore, if a given graph is removed completely, then LR finds an optimal solution.
However, it usually fails to return the optimal value in the remainder $G_\mathrm{c}$ called an LR core.

%BP
The last method is loopy belief propagation (BP).
It is based on statistical--mechanical representations of optimization problems.
The min-VC on graph $G=(V,E)$ is represented by a hard-core lattice gas model with a partition function shown below.
\begin{equation}
Z=\sum_{\bm{x}}\exp\left(-\mu\sum_ix_i\right)\prod_{(i,j)\in E}\theta(x_i+x_j-1) ,\label{eq_21}
\end{equation}
In that equation, $\mu$ stands for a chemical potential of the system and $\theta(x)$ represents a step function which returns $1$ if $x\ge0$ and $0$ otherwise.

In BP, the %likelihood
marginal distribution of each variable is approximated by that of its nearest neighbors on a cavity graph $G\backslash i$.
It reads
\begin{equation}
P(x_i)\simeq Z_i^{-1}\sum_{\bm{x}_{\partial i}}e^{-\mu x_i}
\prod_{j\in \partial i}\theta(x_i+x_j-1)P_{j\rightarrow i}(x_j),\label{eq_22}\\
\end{equation}
where $\partial i = \{j\in V|\,(i,j)\in E\}$ and $Z_i^{-1}$ is a normalization constant.
Probability $P_{j\rightarrow i}(x)$ represents the %likelihood
marginal distribution of spin $x_j=x$ on a cavity graph $G\backslash i$.
On $G\backslash i$, the joint probability $P_{\partial i \rightarrow i}(\bm{x}_{\partial i})$
 of the nearest neighbors $\partial i$ % on $G$
is approximated by the product of $P_{j\rightarrow i}(x_j)$ ($j\in\partial i$).
This Bethe--Peierls approximation neglects correlation among neighboring spins. 

In this approximation scheme, these probabilities satisfy the following
recursive relations: 
\begin{equation}
P_{j\rightarrow i}(x_j)\simeq Z_{j\rightarrow i}^{-1}\sum_{\bm{x}_{\partial j\backslash i}}e^{-\mu x_j}
\prod_{k\in \partial j\backslash i}\theta(x_j+x_k-1)P_{k\rightarrow j}(x_k). \label{eq_23}
\end{equation}
Introducing a cavity field by $P_{j \rightarrow i}(x_j)=e^{-\mu h_{j\rightarrow i}x_j}/(1+e^{-\mu h_{j\rightarrow i}})$
and taking a large-$\mu$ limit,
we obtain a recursive relation of cavity fields as 
\begin{equation}
h_{j\rightarrow i}=1-\frac{1}{\mu}\sum_{k\in \partial j\backslash i}\ln(1+e^{\mu h_{k\rightarrow j}}) .\label{eq_25}
\end{equation}
A local field $h_i$ acting on a variable $x_i$ defined as $P(x_i)=e^{-\mu h_ix_i}/(1+e^{-\mu h_i})$ satisfies
\begin{equation}
h_{i}=1-\frac{1}{\mu}\sum_{j\in \partial i}\ln(1+e^{\mu h_{j\rightarrow i}}) .\label{eq_27}
\end{equation}
By solving these equations, BP provides an approximate value for a given graph.
Actually, BP is exact on a tree graph because of the lack of spin correlations~\cite{mezard2009information}.
In general, however, correlations among spins are affected by cycles on a graph.
Therefore, expecting the existence of fixed points, BP is used as an approximation.
The approximation use of BP equations \eref{eq_25} is called a loopy BP in the literature related to
 statistical physics~\cite{mezard2009information}.

\section{Relation of approximation algorithms for an arbitrary graph}\label{sec_wor}

Before defining the typical performance of approximations algorithms, we state some results related to their worst-case performance.
Because the worst-case results are available for arbitrary graphs,
  it is also useful to analyze the typical performance on an ensemble of random graphs.
As described below, LP invariably approximates min-VC better than
LR, although %BP cannot be compared to LP
LP is not always superior to BP 
in general.

First, we state a theorem which claims that LP finds an optimal solution if LR finds it.
The proof is based on the strong duality theorem of the LP-relaxed problem~\cite{rockafellar1970convex} and the modified LR for dual problems.

\begin{theorem}\label{thm_1}
Letting $G=(V,E)$ be a graph that is removed completely by LR, then LP relaxation of the min-VC has an optimal solution for which an optimal value is equal to that
obtained using LR.
\end{theorem}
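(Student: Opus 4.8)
The plan is to argue by induction on the leaf-removal process, constructing an integral primal solution and a matching dual-feasible solution whose objective values coincide, so that LP strong duality forces both to be optimal. First I would recall the dual of the LP relaxation \eref{eq_3}: maximizing $N^{-1}\sum_{(i,j)\in E} y_{ij}$ over $y_{ij}\ge 0$ subject to $\sum_{j\in\partial i} y_{ij}\le 1$ for every vertex $i$ — this is the fractional matching polytope. The key observation is that each leaf-removal step has a natural dual counterpart: when LR encounters a leaf $w$ with unique neighbor $v$, it covers $v$; on the dual side I would set $y_{vw}=1$ (saturating the constraint at both $v$ and $w$), then delete $v$, $w$ and all edges incident to $v$ from the graph, and recurse on the smaller graph.

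The induction hypothesis is that for the reduced graph $G'$ obtained after one LR step, LR on $G'$ produces a cover $C'$ and the recursion produces a dual-feasible $y'$ with $|C'| = \sum_{(i,j)\in E(G')} y'_{ij}$. Then for $G$ I would take $C = C' \cup \{v\}$ and extend $y'$ by $y_{vw}=1$ and $y_{e}=0$ for the remaining deleted edges (those incident to $v$ other than $vw$). Dual feasibility is preserved because the only vertex whose constraint could be touched is $v$, and $v$ is removed; the constraint at $w$ reads $y_{vw}\le 1$, which holds with equality; no other constraint changes. The objective increases by exactly $1$ on both sides, matching the one new vertex in the cover. The base case is the empty graph, where both sides are $0$. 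Since LR removes $G$ completely, the induction runs to termination and yields an integral cover $C$ together with a dual-feasible $y$ of equal value; by weak duality $C$ is an optimal integral cover and $y$ an optimal dual solution, and by strong duality the LP optimum equals $|C|/N = x_c^{\mathrm{IP}}(G)$. In particular the integral point $x$ with $x_i = \mathbf{1}[i\in C]$ is an optimal LP solution.

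The step I expect to be the main obstacle is verifying that dual feasibility is genuinely maintained across a whole cascade of leaf removals, since LR can create new leaves as it deletes vertices, and one must be careful that edges deleted in one step (incident to $v$ but with the other endpoint surviving) are assigned $y=0$ and never revisited — otherwise a surviving vertex could accumulate dual weight from several steps and violate its constraint. Making the bookkeeping precise — each edge is assigned a dual value exactly once, at the step in which it is deleted, and each vertex's constraint is checked only against edges deleted no later than the vertex itself — is the technical heart of the argument. A secondary point worth stating cleanly is that the cover $C$ produced this way is exactly the cover LR reports (this is essentially the correctness of LR, already noted in \sref{sec_def}), so the value claimed for LP indeed matches the LR output. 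One could alternatively phrase the whole thing as exhibiting a primal–dual pair of extreme points and invoking complementary slackness, but the inductive construction above is the most transparent route.
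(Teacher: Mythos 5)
Your proposal is correct and follows essentially the same route as the paper's proof: both pass to the dual fractional-matching LP and observe that each leaf-removal step simultaneously adds one vertex to the primal cover and saturates one edge in a dual-feasible (in fact integral matching) solution, so that duality squeezes the LP optimum to the LR value. Your explicit induction with the edge-by-edge dual bookkeeping (and the observation that weak duality already suffices) is a more carefully spelled-out version of the paper's ``modified LR for the dual problem'' argument, not a different method.
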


\begin{proof}
Letting $G_f=(V,F,E')$ be a factor graph representation of $G$, for which $F=E$, then the LP relaxed min-VC~\eref{eq_3} is represented as a standard form of
\begin{equation}
\mbox{Min. }x_c=N^{-1}\bm{c}^{\mathrm{T}}\bm{z},\:\mbox{s.t. }A\bm{z}= \bm{b},\, \bm{z}\ge \bm{0}, \label{eq_32}
\end{equation}
where $\bm{z}=(\{x_i\},\{z_a\})^{\mathrm{T}}\in\mathbb{R}^{N+M}$, 
$\bm{c}=(1,\cdots,1)^{\mathrm{T}}\in\mathbb{Z}^{N}$,
 $A=[A_{\mathrm{inc}},-I]\in\mathbb{Z}^{M\times (N+M)}$,  and 
$\bm{b}=(1,\cdots,1)^{\mathrm{T}}\in\mathbb{Z}^{M}$. 
Therein, $\{z_a\}$ are slack variables. $I$ represents an identity matrix of size $M$.
We also introduce $A_{\mathrm{inc}}\in\mathbb{Z}^{M\times N}$ as an incident matrix of graph $G$.
This primal LP problem~\eref{eq_32} has a dual problem given as
\begin{equation}
\mbox{Max. }y_c=N^{-1}\bm{b}^{\mathrm{T}}\bm{y},\:\mbox{s.t. }A^{\mathrm{T}}\bm{y}\le \bm{c}, \label{eq_33}
\end{equation}
where $\bm{y}\in\mathbb{R}^{M}$.
This optimization is equivalent to the following:
\begin{equation}
\mbox{Max. }y_c=N^{-1}\sum_{a}y_a,\:\mbox{s.t. }\sum_{a\in\partial i}y_a\le 1,\, y_a\ge 0. \label{eq_34}
\end{equation}

These primal and dual problems are feasible because $\bm{z}=(1,\cdots,1,0\cdots,0)$ and $\bm{y}=\bm{0}$ are, respectively, feasible solutions.
Considering that they are bounded, they have LP optimal values as $x_c^{\mathrm{LP}}=\min x_c$ and $y_c^{\mathrm{LP}}=\max y_c$.
The strong duality theorem then suggests that $x_c^{\mathrm{LP}}=y_c^{\mathrm{LP}}$ on every graph.

Next we return to the original min-VC and its dual one.
We respectively define the IP minimum and maximum values corresponding to \eref{eq_32} and \eref{eq_33} as $x_c^{\mathrm{IP}}$
and $y_c^{\mathrm{IP}}$.
Then, a trivial relation $y_c^{\mathrm{IP}}\le y_c^{\mathrm{LP}}=x_c^{\mathrm{LP}}\le x_c^{\mathrm{IP}}$ holds.
If LR can find $x_c^{\mathrm{IP}}$ and $y_c^{\mathrm{IP}}$ simultaneously and $x_c^{\mathrm{IP}}=y_c^{\mathrm{IP}}$ holds, then
  equalities in the above inequalities hold. An IP optimal solution $x_c^{\mathrm{IP}}$ obtained using LR is equal to $x_c^{\mathrm{LP}}$.

Next we demonstrate that LR finds a dual optimal solution with an optimal value equal to that of the primal min-VC.
At each step, LR searches leaf $L_a=\{\partial a|\,a\in F, \mathrm{deg}(i)= 1
(\exists i\in\partial a), \mathrm{deg}(j)\ge 1(\forall j\in\partial a\backslash i)\}$.
During its iterations, LR assigns a covered state to a variable node with maximum degree in $L_a$.
To solve the dual problem, in contrast, we need only to change
 an assigned node from the variable node
to functional node $a$.
When leaf $L_a$ is removed, its neighboring functional nodes are also removed.
This procedure is nothing but a witness of local optimality of the primal and dual IP problems.
Because LR assigns one variable or functional node in $L_a$ to a covered state,
  the %total ratio of
 covered variables $x_c^{\mathrm{IP}}$ and $y_c^{\mathrm{IP}}$ are exactly equivalent.
We thus complete the proof. \qed
\end{proof}

However, a counterexample to the converse of the theorem exists.
On a bipartite graph without a leaf, LR cannot remove the graph and covers all vertices.
Using complete unimodularity of its incident matrix and the Hoffman--Kruskal theorem~\cite{hk},
it can be shown that LP relaxation returns an integral optimal solution.
These facts demonstrate that LP has greater ability to seek optimal solutions than LR does.
From the upper bound of LP relaxation, it is trivial that the minimum cover ratio does not exceed a half if LR can find it.

% LP vs. BP
In the computer science literature, it has been revealed that an LP relaxed solution has a close relation
 to BP fixed points~\cite{sanghavi2007message}.
One can consider the minimum weighted vertex covers (min-WVC), min-VCs with a weighted cost function.
It can be shown that there exists an one-to-one map between approximate solutions obtained using the loopy BP and
extreme-point solutions of the LP-relaxed simplex.
The weights of the problems might be changed by this map, but the graph is invariant.
This fact indicates that LP and BP are not equivalent for a given min-WVC in general.
In fact, examples exist in which LP relaxation returns a true optimal solution but BP does not, and vice versa.
Although the difference in performance of LP and BP for related b-matching problems is shown~\cite{Bayati2011}, 
it is difficult to analyze them for min-VCs.

\section{Randomized min-VC and definition of typical performance}\label{sec_ran}
 The main purpose of this paper lies in evaluation of the average-case
 behavior of approximation algorithms, which requires 
 the setting of random graph ensembles %to define it, unlike
 in contrast to the worst-case performance.
 As the approximation ratio in the worst-case analysis, a gap separating
 optimal and approximate values averaged over the random graph ensemble is a fundamental quantity to evaluate.
 As described in this paper, we specifically examine the simplest ensembles characterized by the degree distribution.
 
Let $\mathcal{G}(N)$ be a set of random graphs with a degree sequence having cardinality $N$ consisting of i.i.d. random variables
 of degree distribution $p_k$ ($k\ge 0$).
 For the analyses in this paper, we assume that $p_k$ is independent of $N$.
 The average degree $c$ is then defined as $c=\sum_kkp_k$.
 Then the weight of each graph in $\mathcal{G}(N)$ depends on its degree sequence.
 An optimal value of the min-VC averaged over random graphs $\mathcal{G}(N)$ in the thermodynamic limit is defined as
 \begin{equation}
 x_c^{\mathrm{IP}}(\mathcal{G})=\lim_{N\rightarrow\infty}\overline{x_c^{\mathrm{IP}}(G)}, \label{eq_31}
 \end{equation}
 where $\overline{(\cdots)}$ is an average over random graphs in $\mathcal{G}(N)$.
 Similarly, we define an average approximate value $x_c^{\mathrm{LP}}(\mathcal{G})$, $x_c^{\mathrm{LR}}(\mathcal{G})$, and 
 $x_c^{\mathrm{BP}}(\mathcal{G})$ respectively by LP relaxation, leaf removal, and BP.

 We present the typical behavior of approximations for the well-known Erd\"os-R\'enyi graphs as an example.
The Erd\"os-R\'enyi random graphs are generated by independently connecting an edge between each pair of vertices
 with probability $p=c/N(N-1)$.
 In the large-$N$ limit, its degree distribution converges to the Poisson distribution with mean $c$.
 The Erd\"os-R\'enyi random graph is therefore characterized by its average degree $c$.

In the statistical physics literature, the min-VC on the Erd\"os-R\'enyi random graph has been deeply studied~\cite{weigt2000number}.
 It is a notable result that mean-field theories such as the replica
 method and the RS cavity method succeed in estimating $x_c^{\mathrm{IP}}(c)$
 up to the so-called RS--RSB threshold $c=e(=2.71\cdots)$.
 Loopy BP and its variants succeed in estimating optimal values below the threshold~\cite{Weigt2006}.
 The convergence is also investigated numerically by evaluating spin-glass susceptibility~\cite{Zhang2009}. %in statistical physics and
 Above the threshold, however, BP fails to converge because of strong correlations of neighboring spins.
 As for the typical performance of the loopy BP, there exists a phase transition at $c=e$.

Phase transition of the typical performance of LR is studied using a generating function~\cite{bauer2001random}, which revealed that a large LR core emerges and LR fails to
 approximate optimal values above the critical average degree $c=e$.

The LP relaxation is also investigated in terms of its typical behavior.
 Its transition has been reported numerically \cite{Dewenter2012} and analyzed theoretically in \cite{takabe2014typical}.
 As shown in the next section, the analysis is based on the cavity method for a three-state lattice--gas model called the LP--IP model.
 In the case of Erd\"os-R\'enyi random graphs, LP relaxation approximates min-VCs with high accuracy below the critical threshold $c=e$.
 Above the threshold, the minimum number of half-integers in LP-relaxed solutions is the order of $N$.
 Therefore, $x_c^{\mathrm{LP}}(c)<x_c^{\mathrm{IP}}(c)$ indicating incorrect approximation by LP relaxation.
 
 These studies reveal that three approximation methods for
 min-VCs have a phase transition of typical performance 
 at the same critical threshold at which the RS--RSB transition occurs.
 The motivation of this paper is to ascertain whether this relation holds in other ensembles, or not.
 If not, differences exist in typical performance of
 approximations because the worst-case performance depends significantly on
 approximation algorithms.

\section{Typical performance analyses of approximations}\label{sec_typ}
This section presents a description of theoretical analyses of typical behavior of approximations.
We basically use tree approximations, a mean-field approximation for graphs.
Because it is difficult to analyze LP relaxation directly, we apply a statistical--mechanical technique
 to an effective three-state lattice--gas model.
Theoretical analyses enable us to predict the threshold of typical behavior of three approximation algorithms.
In the following subsection, we discuss possible magnitude relations of their thresholds.

\subsection{LR}
Typical properties of LR are derived from theoretical analyses based on the generating function method.
Considering a rooted tree, one can evaluate its typical behavior including an average approximate value and the LR core fraction.
Some works have examined LR for the min-VC \cite{bauer2001random,takabe2014minimum}
and related problems \cite{zhao2015statistical}.
Recently, general results for the min-VC on random graphs with an arbitrary degree distribution are shown~\cite{Lucibello2014}.

In the generating function method, it is assumed that LR can remove at least a leaf.
To consider all possible cases, we show a theorem extended from the original one in \cite{Lucibello2014}.
We restrict the original theorem to the min-VC and add it to the case
in which no leaves are shown on the graph.

\begin{theorem}{}\label{thm_2}
Assume that a given graph ensemble is characterized by degree distribution $p_k$.
If LR cannot work at all, i.e., $p_1=0$, then $x_c^{\mathrm{LR}}(c)=1-p_0$.
Otherwise, let $g(x)$ be a continuous and increasing function represented as
\begin{equation}
g(x)=\sum_k\frac{kp_k}{c}x^{k-1}. \label{eq_52}
\end{equation}
Given that $X$ and $Y(\le X)$ satisfy relations $X=g(1-Y)$, $Y=g(1-X)$ and $0\le Y\le X \le1$,
the average approximate value obtained using LR is represented as
\begin{equation}
x_c^{\mathrm{LR}}(c)=1-\frac{c}{2}(X^2+2XY-2Y^2)+\sum_kp_k\{(1-Y)^k-2(1-X)^k\}. \label{eq_53}
\end{equation}
Especially, LR typically works well without generating a large LR core iff $X=Y$.
\end{theorem}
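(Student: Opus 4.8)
The plan is to analyze leaf removal on the locally tree-like structure of $\mathcal{G}(N)$. Because the degrees are i.i.d. with law $(p_k)$, a neighbourhood of a uniformly chosen vertex converges as $N\to\infty$ to a Galton--Watson tree whose root has offspring law $(p_k)$ and whose other vertices have the size-biased offspring law with generating function $g$ from \eref{eq_52}. Leaf removal being a strictly local rule, the state of a vertex in the LR output depends with probability $1-o(1)$ only on finitely many generations of this tree, and the fraction of covered vertices concentrates on its tree expectation; the $o(N)$ corrections are handled by a standard differential-equation / concentration argument as in \cite{bauer2001random,Lucibello2014}. Hence $x_c^{\mathrm{LR}}(c)$ equals the probability that the root is covered in the LR output on this tree.

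The degenerate case $p_1=0$ is immediate: such a graph has no leaf, so leaf removal performs no step (a leaf can only be produced by deleting a covered root, which presupposes a leaf), the whole graph minus its isolated vertices becomes $G_{\mathrm c}$ and is covered, and $x_c^{\mathrm{LR}}(c)=1-p_0$.

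For $p_1>0$ I would first set up a message recursion on directed edges. For $i\to j$, call the message $A$ if leaf removal on the branch below $i$ reduces $i$ to a bare leaf hanging from $j$ (forcing $j$ to cover $i$), $B$ if $i$ is covered as a root inside that branch (so the edge vanishes), and $C$ otherwise. Writing $Y=\Pr[A]$ and $1-X=\Pr[B]$, one reads off: $i$ becomes a bare leaf iff all its other children were covered, so $Y=g(1-X)$; and $i$ is covered iff at least one of its other children is a bare leaf pointing at it, so $1-X=1-g(1-Y)$, i.e. $X=g(1-Y)$. As $g$ is continuous, increasing, with $g(0)=p_1/c>0$, the physical iteration --- start from ``nothing covered'', i.e. $X=1$, $Y=g(0)$, and update repeatedly --- is monotone; an easy induction shows every iterate keeps $Y\le X$, so it converges to the extremal fixed point with $0\le Y\le X\le1$. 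The part I expect to be the real work is to show this extremal fixed point describes the \emph{stable} LR core, i.e. an edge survives in $G_{\mathrm c}$ exactly when it carries the message $C$ from \emph{both} sides (independent events, depending on disjoint branches), so that no unstable intermediate stage of the cascade is counted --- a one-sided $C$-chain fed by a leaf actually collapses, so the naive three-state recursion without the mutual-consistency requirement over-counts the core.

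It then remains to evaluate the probability that the root $v$, given degree $k$, is covered: conditionally on $k$ its incident edges carry independent messages, $A$ with probability $Y$, $B$ with probability $1-X$, $C$ with probability $X-Y$, and $v$ fails to be covered exactly when it ends isolated (all incident messages $B$) or as an uncovered leaf. Two delicate points arise in the count: a pendant edge, on which $v$ and a neighbour end as mutual bare leaves and the tie-break covers exactly one of them, must be given weight $\frac12$; and a root left with a single surviving $C$-edge is actually an uncovered leaf (its $C$-chain cascades away instead of joining the core). Summing $\Pr[v\text{ covered}\mid k]$ against $p_k$, collapsing the degree sums with $\sum_k kp_k(1-X)^{k-1}=c\,g(1-X)=cY$ and $\sum_k kp_k(1-Y)^{k-1}=cX$, and regrouping the edge-counted contributions --- whose prefactor $\frac{c}{2}=M/N$ is the edge density and which assemble into $\frac{c}{2}(X^2+2XY-2Y^2)$ --- against the degree-dependent boundary terms $\sum_k p_k\{(1-Y)^k-2(1-X)^k\}$, yields \eref{eq_53}. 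The last assertion is then a corollary: the density of $G_{\mathrm c}$ is the probability that $v$ carries at least two incident core edges, a quantity that is positive whenever $X>Y$ and zero when $X=Y$; so leaf removal strips all but an $o(N)$ fraction of the graph, and thus returns a true optimum, iff $X=Y$.
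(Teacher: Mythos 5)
Your proposal is correct and follows essentially the same route as the paper, whose own proof handles the $p_1=0$ case directly and then simply defers the $p_1>0$ case to the rooted-tree (Karp--Sipser / generating-function) analysis with "details omitted." You supply exactly the intended missing content --- the three-state message recursion giving $X=g(1-Y)$, $Y=g(1-X)$, the selection of the extremal fixed point, and the root-counting with the tie-break and cascade subtleties that reproduce \eref{eq_53} and the core criterion behind \eref{eq_54} --- so your write-up is, if anything, more complete than the paper's.
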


\begin{proof}
For the case in which $p_1=0$, LR finds no leaves.
The LR core then consists of connected components in a given graph, for which the average cardinality is equal to $(1-p_0)N$.
The average cover ratio is therefore equal to $1-p_0$ because all vertices are covered in the LR core.

Otherwise, LR removes $O(N)$ vertices with high probability.
Its performance is evaluated using the probabilistic analysis of the rooted tree.
Details are omitted here because one must only slightly modify the proof in~\cite{karp1981maximum}.
\qed
\end{proof}

For almost all cases in which $p_1>0$, the average fraction of the LR core $r_{\mathrm{c}}$ is represented as
\begin{equation}
r_{\mathrm{c}}(c)=\sum_kp_k\{(1-Y)^k-(1-X)^k\}-c(X-Y)Y. \label{eq_54}
\end{equation}
Then the condition $X=Y(<1)$ is equivalent to almost complete deletion of graphs.
As described in \sref{sec_def}, this means that LR approximates optimal values typically with high accuracy.
Otherwise, the emergence of a large LR core results in incorrect estimation of the algorithm.
The threshold $c^{\mathrm{LR}}$ above which LR typically fails good approximation of the problem
is given by the linear instability of the fixed point satisfying $X=Y$.
We therefore obtain the condition as $|g'(1-X(c^{\mathrm{LR}}))|=1$.

\subsection{LP and BP}\label{sec_lpip}
Typical behavior of LP and BP for min-VCs is analyzed using the same model with different parameters.
Focusing on the half-integrality property of LP relaxation, the LP--IP model for min-VCs is represented by the three-state Ising model
with a Hamiltonian represented by
\begin{equation}
\mathcal{H}_r(\bm{x})=-\sum_{i=1}^N x_i+\mu^{r-1}\sum_{i}\delta_{x_i,1/2}, \label{eq_m1}
\end{equation}
where $x_i\in\{0,1/2,1\}$ and $r$ is a constant parameter for a penalty term.
With appropriate parameter $r$ fixed, the LP--IP model in the large-$\mu$ limit describes optimal solutions obtained using LP and IP.
For the case in which $r>1$, the penalty terms prohibit each spin taking a half.
Consequently, ground states consist of integers resulting in the IP optimal solution.
We designate this limit as an IP-limit.
When $0<r<1$, ground-state energy is equivalent to the LP-relaxed value
assuming the half-integrality. 
The ground states include the minimum number of half integers.
This limit is defined by an LP-limit.
We are therefore able to analyze the typical behavior of LP and BP
through mean-field analysis of the LP--IP model.
Details of the analysis using the RS cavity method are described in the Appendix.

The averaged optimal value of the min-VCs is given as
\begin{equation}
 x_c^{\mathrm{IP}}(c)=x_c^{\mathrm{BP}}(c)=1-\frac{c}{2}X^2-\sum_kp_k(1-X)^{k}, \label{eq_m3}
\end{equation} 
where $X$ satisfies an equation $X=g(1-X)$.
This result is based on the RS ansatz, which corresponds
to the typical analysis of the loopy BP.
It works well, supported by numerical simulations, if the average degree is small.
In contrast, above some threshold, the RS solution is unstable against perturbation.
It is known as the RS--RSB threshold in the spin-glass theory.
To examine the stability, the de Almeida--Thouless
condition~\cite{de1978stability} is often used, but is difficult to apply to our case.
We therefore use an alternative linear stability of RS solutions~\cite{zdeborova2006number}. 
Then, the threshold of average degree $c^{\mathrm{IP}}$ satisfies $|g'(1-X(c^{\mathrm{IP}}))|=1$.
Below $c^{\mathrm{IP}}$, the RS solution is linearly stable.
In terms of the loopy BP, $c^{\mathrm{IP}}$ is regarded as the threshold for which the fixed point predicted under the RS ansatz exists.
We therefore naively assume that $c^{\mathrm{BP}}=c^{\mathrm{IP}}$ and use $c^{\mathrm{IP}}$ as the performance threshold of BP.

However, the LP-relaxed approximate value averaged over the same random graphs is
\begin{equation}
 x_c^{\mathrm{LP}}(c)=1-\frac{c}{2}XY-\frac{1}{2}\sum_kp_k\left\{(1-Y)^k+(1-X)^{k}\right\}, \label{eq_m4}
\end{equation}
 and an average ratio of the LP core, vertices on which variables take a
 half, follows
\begin{eqnarray}
 p_h(c)&=\sum_{k}p_k\left\{(1-Y)^k-(1-X)^k\right\}-cY(X-Y) \label{eq_m5}
\end{eqnarray}
where $X$ and $Y(\le X)$ satisfy the following equations,
\begin{equation}
 X=g(1-Y),\quad Y=g(1-X). \label{eq_m6}
\end{equation}

It is apparent that $x_c^{\mathrm{LP}}(c)=x_c^{\mathrm{IP}}(c)$ and $p_h(c)=0$ under the condition $X=Y$, which implies that LP relaxation works typically with good accuracy.
Otherwise, one finds that $x_c^{\mathrm{LP}}(c)<x_c^{\mathrm{IP}}(c)$ and $p_h(c)>0$, which suggests that the LP relaxed solution
 usually consists of numerous vertices with the half-integers. %of the order of $N$.
It is worth noting that equations \eref{eq_m6} for $X$ and $Y$ of LP relaxation correspond to those of LR,
although $x_c^{\mathrm{LP}}(c)\neq{x_c^{\mathrm{LR}}}(c)$ in general.
If $X=Y$, however, then $x_c^{\mathrm{LP}}(c)={x_c^{\mathrm{LR}}}(c)$ holds
resulting in $x_c^{\mathrm{IP}}(c)=x_c^{\mathrm{LP}}(c)={x_c^{\mathrm{LR}}}(c)$.
This fact shows that unless LR cannot work at all,
it approximates the problem well as long as LP relaxation does.
Considering the linear stability denoted above, the solution such that $X=Y$ is stable under some average degree $c^{\mathrm{LP}}$
given by $|g'(1-X(c^{\mathrm{LP}}))|=1$.
This condition is nothing but those for $c^{\mathrm{IP}}$ and $c^{\mathrm{LR}}$ under $p_1>0$.
We therefore conclude that, if $p_1>0$, then
  $c^{\mathrm{LR}}=c^{\mathrm{LP}}=c^{\mathrm{IP}}$ holds as the case of Erd\"os-R\'enyi random graphs.
In terms of typical performance of approximations, it claims that the three methods, BP, LP, and LR fail good approximations at the same
threshold for random graphs with numerous leaves.
Additionally, it is valid that the average fraction of the LR core $r_c(c)$ is equivalent
 to that of the LP core $p_h(c)$ in LP-relaxed solutions.

Another stable solution is $(X,Y)=(1,0)$, giving the upper bound of the
LP relaxation $x_c(c)=1/2$. 
The solution $X=Y$ usually engenders $x_c^{\mathrm{LP}}(c)\le 1/2$; it represents the ground-state energy of the system.
If the solution $X=Y$ provides $x_c^{\mathrm{LP}}(c)> 1/2$, then the
ground state is given by the solution 
$(X,Y)=(1,0)$. 
Its condition is then represented by $x_c^{\mathrm{IP}}(c)> 1/2$
 because $x_c^{\mathrm{LP}}(c)$ with $X=Y$ is equivalent to $x_c^{\mathrm{IP}}(c)$.

\subsection{Difference of typical performance: possible scenarios}\label{sec_dif}
Here we discuss possible cases of difference in typical performance
derived from the analytical results.
From Theorem~\ref{thm_1}, LP works better than LR even in terms of typical performance.
The theorem indicates that $c^{\mathrm{LR}}\le c^{\mathrm{LP}}$ always holds; the inequality is strict iff $p_1=0$.
The RS analyses of LP and BP engender $c^{\mathrm{LP}}\le c^{\mathrm{IP}}$,
  where it is strict if $x_c^{\mathrm{IP}}(c^{\mathrm{IP}})> 1/2$.

From these results, we find four cases:
(i) $c^{\mathrm{LR}}=c^{\mathrm{LP}}=c^{\mathrm{IP}}$; (ii) $c^{\mathrm{LR}}<c^{\mathrm{LP}}=c^{\mathrm{IP}}$;
(iii) $c^{\mathrm{LR}}<c^{\mathrm{LP}}<c^{\mathrm{IP}}$; and (iv) $c^{\mathrm{LR}}=c^{\mathrm{LP}}<c^{\mathrm{IP}}$.
However, case (iv) is denied because of the following reasons:
if case (iv) is true, then $p_1>0$ and $x_c^{\mathrm{LP}}(c^{\mathrm{LP}})=1/2$ holds.
Between two thresholds $c^{\mathrm{LP}}$ and $c^{\mathrm{IP}}$, the
solutions of equation \eref{eq_m6} are $(X,Y)=(1,0)$ and $(x,x)$ ($0<x<1$).
If $(X,Y)=(1,0)$ is applied, then LR generates almost all parts of a graph as a LR core,
  but this yields a contradiction because $O(N)$ vertices are always removed by $p_1>0$.
Otherwise, LR typically finds true optimal values, but it conflicts with the fact that $x_c^{\mathrm{IP}}(c)>1/2$ if $c>c^{\mathrm{LP}}$ and
that LR returns optimal values below a half.

In summary, these theoretical analyses reveal that three cases exist on the threshold of typical performance:
(i) $c^{\mathrm{LR}}=c^{\mathrm{LP}}=c^{\mathrm{IP}}$; (ii) $c^{\mathrm{LR}}<c^{\mathrm{LP}}=c^{\mathrm{IP}}$;
and (iii) $c^{\mathrm{LR}}<c^{\mathrm{LP}}<c^{\mathrm{IP}}$.
Especially, if $p_1>0$, i.e., then LR removes $O(N)$ vertices. Three algorithms work well on the same graph ensemble.
In the next section, we present some examples satisfying cases (ii) and (iii).

\section{Examples: failure of LR and LP}\label{sec_exe}
In the last section, we find theoretically that BP, LP, and LR usually have the same ability of typical approximation for the min-VC.
We also predict that, in some cases, these methods will have different typical performance, as is true with their worst performance shown in \sref{sec_wor}.
In this section, we provide some examples in which BP and LP have better typical performance than LR.
We also describe a special case in which BP has the best typical performance among them.

\subsection{Regular random graphs and their variants}
The simplest case in which LR cannot work at all is $K(\ge 2)$-regular random graphs.
A statistical--mechanical analysis of the min-VC on regular random graphs reveals that the RS solution is stable iff $K=1,2$~\cite{Barbier2013}.
In contrast, apparently, LR cannot work at all on the $2$-regular random graphs.
This is a trivial example in which the only LR typically fails to approximate the problem with high accuracy, i.e., 
$c^{\mathrm{LR}}=1<c^{\mathrm{LP}}=c^{\mathrm{IP}}=2$.

Considering graph ensembles with fluctuating vertex degree, it is
nontrivial whether there is an ensemble yielding 
$c^{\mathrm{LR}}<c^{\mathrm{LP}}=c^{\mathrm{IP}}$.
Random graphs with a bimodal degree distribution are a natural extension of the regular random graphs.
For example, the bimodal degree distribution is represented as $p_k=(1-p)\delta(k-K)+p\delta(k-K-1)$ with average degree $K+p$ ($0\le p <1$).
However, we find that
$c^{\mathrm{LR}}=2-\epsilon<c^{\mathrm{LP}}=c^{\mathrm{IR}}=2$ ($\forall
\epsilon >0$), suggesting that
differences of typical performance emerge only in the $2$-regular graphs.
From these observations, we cannot ascertain whether the gap of typical performance results from a homogeneous property
 of the regular random graphs, or not.
In some examples considered later, we consider inhomogeneous random graphs in which a key of difference is
not homogeneity but their degree bounds.

\subsection{BA-like scale-free networks}
To examine inhomogeneous random graphs with a continuous average degree, we define random graphs with the degree distribution given as
\begin{equation}
p_k=
\cases{
\frac{2(1-p)}{m+2} &($k=m$),\\
\frac{2m(m+1)}{k(k+1)(k+2)}(1-p)+\frac{2(m+1)(m+2)}{k(k+1)(k+2)}p &($k>m$), 
}
\label{eq_ba1}
\end{equation}
where $0\le p< 1$. 
Degree distribution $p_k$ is a mixture of two degree distributions appearing
 in the Barabasi--Albert (BA) model~\cite{barabasi1999emergence,dorogovtsev2000structure}.
By introducing $p$ as a parameter, its average degree $c=2(m+p)$ is a real number, although the original BA model has a discrete average degree.
The distribution is truncated with $m$ representing %is truncated: integer $m$ represents
 the lower degree bound of graphs.
Although the original BA model is generated dynamically, we statically construct a random graph as a configuration model~\cite{bender1978asymptotic}, which enables us to avoid some intrinsic correlations in the BA model.

If we set $m\ge 2$, LR cannot work at all, i.e., $c^{\mathrm{LR}}=4-\epsilon$ ($\forall \epsilon>0$).
It is a main subject whether LP and BP are still good approximations in this case.
By solving equations in the last section, their theoretical estimations are available.
\Fref{fig1} shows $x_c(c)$ as a function of average degree $c$.
By evaluating the linear instability, it is apparent that $c^{\mathrm{LP}}=c^{\mathrm{IP}}\simeq 5.239$.
Below the threshold, RS solutions in both limits merge, but they split otherwise.
The RS solution in the LP-limit rapidly approaches to the theoretical upper bound.
We also perform numerical simulations of LP relaxation using %We use
lp\_solve solver~\cite{lpsolve} with the revised simplex method.
They agree well with the analytical estimations suggesting the correctness of statistical--mechanical analyses.
These facts show that there exists a graph ensemble in which typical
behavior of LR is the worst among three approximation methods.
We present the average half-integral ratio $p_h(c)$ in \fref{fig2}.
Although finite-size effects are observed for numerical results, we find the analytical result for $p_h(c)$ is also asymptotically correct.
In other words, LP possibly finds some fraction of integer variables even if the whole graph is an LR core.

\begin{figure}%[!htb]
\begin{center}
 \includegraphics[trim=0 0 0 0,width=0.8\linewidth]{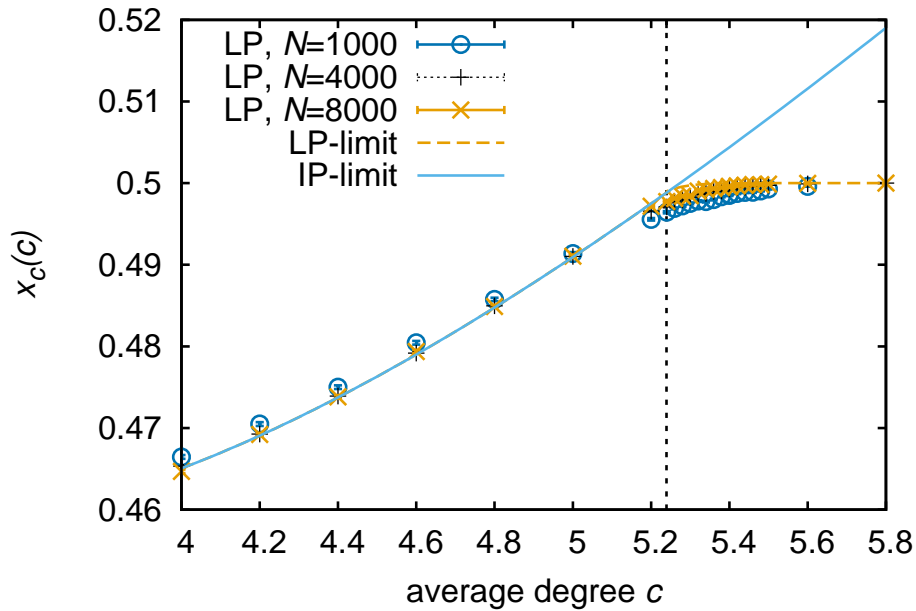}
 \caption{(Color online) Average optimal value $x_c(c)$ for the
 min-VC on BA-like scale free networks as a function of the average degree $c$.
 Solid and dashed lines respectively represent RS solutions in the IP-limit and LP-limit.
 The vertical dashed line corresponds to the threshold $c^{\mathrm{LP}}=c^{\mathrm{IP}}\simeq 5.239$, above which
 the linear stability in the IP-limit breaks.
 Symbols are numerical results of LP relaxation using lp\_solve solver.
 They are averaged, respectively, over $1600$, $1000$, and $400$ graphs with cardinality $N=1000$, $4000$, and $8000$. 
}
 \label{fig1}
\end{center}
\end{figure}

\begin{figure}%[!htb]
\begin{center}
 \includegraphics[trim=0 0 0 0,width=0.8\linewidth]{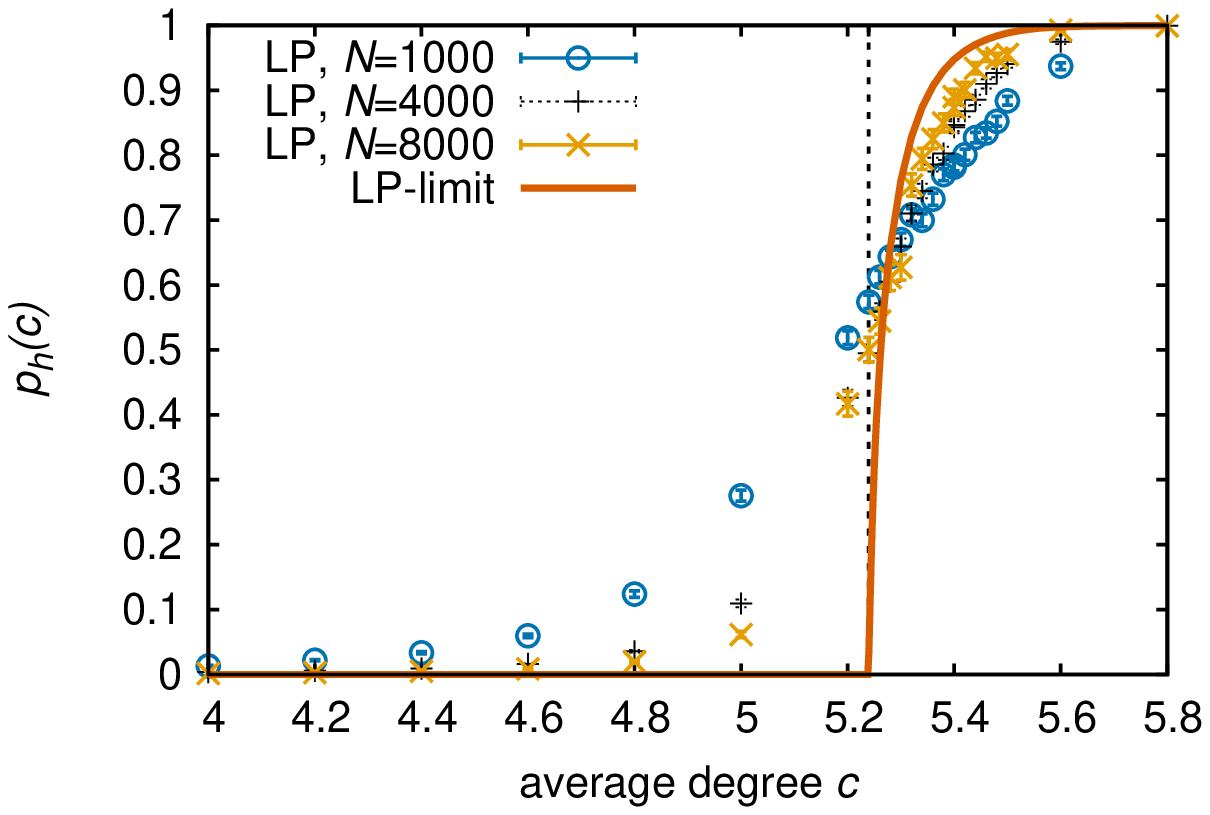}
 \caption{(Color online) Average half-integral ratio $p_h(c)$ for the
 min-VC on BA-like scale free networks as a function of the average degree $c$.
 The solid line represents RS estimations of the $p_h(c)$ in the LP-limit.
 The vertical dashed line is $c^{\mathrm{LP}}$, above which numerous half-integers emerge.
 Symbols are numerical results of LP relaxation using lp\_solve solver.
 They are averaged, respectively, over $1600$, $1000$, and $400$ graphs with cardinality $N=1000$, $4000$, and $8000$. 
}
 \label{fig2}
\end{center}
\end{figure}

\subsection{Scale-free networks with continuous power}
Here we provide a more general class of random graphs in which the power of the degree distribution can be tuned.
The degree distribution is represented as
\begin{equation}
p_k=\cases{
C_{0}m^{-\gamma} &($k=m$),\\
(1-p)C_{0}k^{-\gamma}+pC_{1}k^{-\gamma} &($k>m$), 
}
\label{eq_c1}
\end{equation}
where $0\le p<1$, $\gamma>2$, and $m\in \mathbb{N}$.
$C_{a}^{-1}$ ($a=0,1$) is a normalization factor given by the generalized Riemann zeta function $\zeta(\gamma,m+a)\equiv\sum_{k=m+a}^\infty k^{-\gamma}$.
The degree is bounded by $m$. Its average is given as a function of $m$, $p$, and $\gamma$.
It reads
\begin{equation}
c(m,p,\gamma)=\frac{\zeta(\gamma-1,m)}{\zeta(\gamma,m)}(1-p)+\frac{\zeta(\gamma-1,m+1)}{\zeta(\gamma,m+1)}p.\label{eq_c2}
\end{equation}

Fixing $\gamma$, the threshold of the RS--RSB transition is given as $m^{\mathrm{IP}}(\gamma)$ and $p^{\mathrm{IP}}(\gamma)$.
The critical average degree is therefore given as $c^{\mathrm{IP}}(\gamma)=c(m^{\mathrm{IP}}(\gamma),p^{\mathrm{IP}}(\gamma),\gamma)$.
We show $c^{\mathrm{IP}}(\gamma)$ and $m^{\mathrm{IP}}(\gamma)$ in \fref{fig3}.
Because $\gamma$ is close to $2$, both diverge.
However, $\lim_{\gamma\rightarrow\infty}c^{\mathrm{IP}}(\gamma)=2$ because the graph ensemble converges to $2$-regular random graphs.
\Fref{fig4} shows the average cover ratio $x_c^{\mathrm{IP}}$ at the critical average degree $c^{\mathrm{IP}}(\gamma)$.
There exists some $\gamma$ at which $x_c^{\mathrm{IP}}(c^{\mathrm{IP}})>1/2$.
Considering the upper bound of LP relaxation, it fails good estimations at $c^{\mathrm{LP}}$, where $x_c^{\mathrm{LP}}(c^{\mathrm{LP}})=1/2$.
This fact suggests that there exists a third case for which LP relaxation has no good approximations.
 Nevertheless loopy BP still works well.

\begin{figure}%[!htb]
\begin{center}
 \includegraphics[trim=0 0 0 0,width=0.8\linewidth]{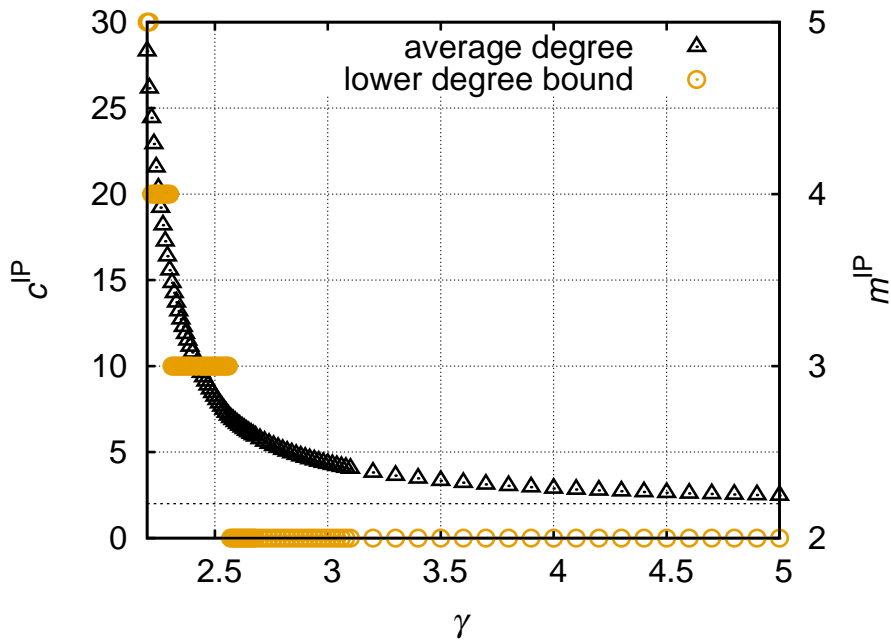}
 \caption{(Color online) Critical average degree $c^{\mathrm{IP}}$ (triangle) and critical lower degree bound $m^{\mathrm{IP}}$ (circle)
  as a function of the power $\gamma$. The horizontal dotted line represents $c^{\mathrm{IP}}=2$, 
  limiting value of $c^{\mathrm{IP}}$ as $\gamma\rightarrow \infty$.
}
 \label{fig3}
\end{center}
\end{figure}

\begin{figure}%[!htb]
\begin{center}
 \includegraphics[trim=0 0 0 0,width=0.8\linewidth]{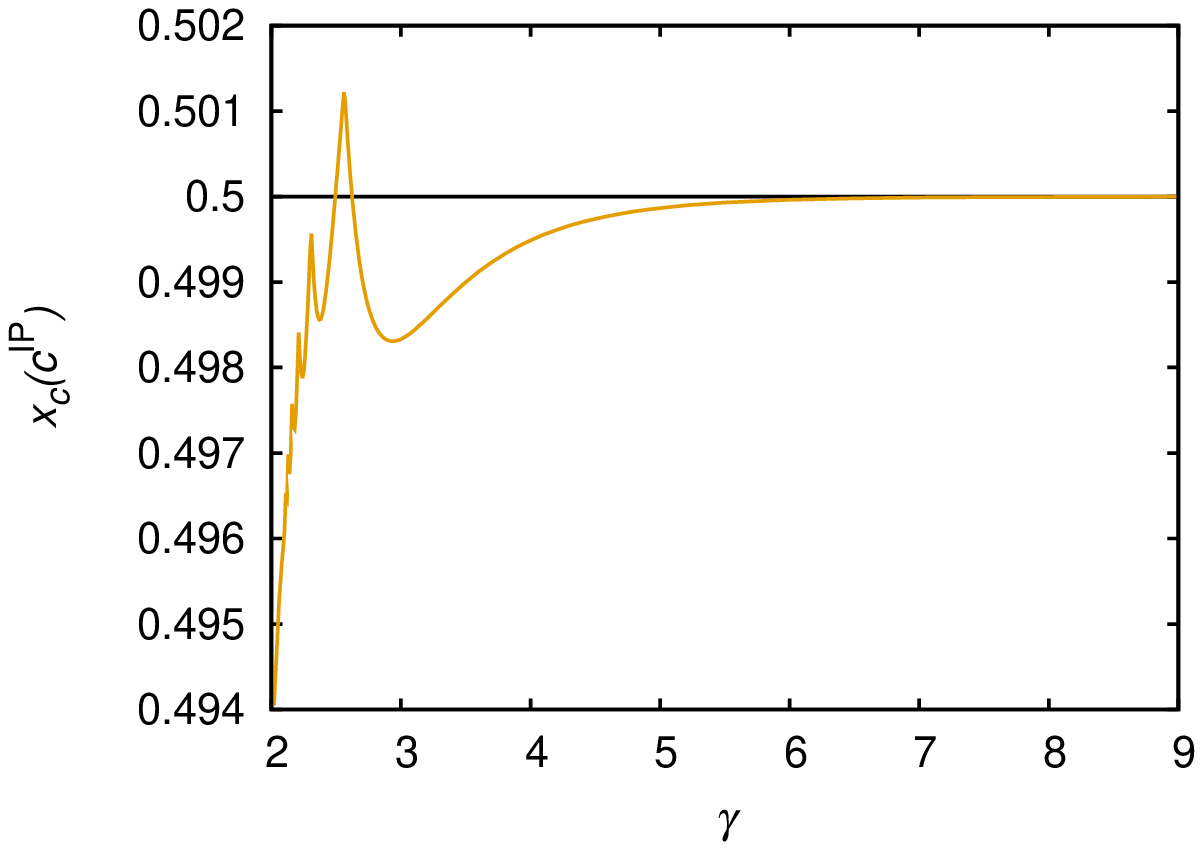}
\caption{(Color online) Average optimal value $x_c^{\mathrm{IP}}(c^{\mathrm{IP}})$ at which the RS solution is unstable
 as a function of the power $\gamma$.
 The horizontal line represents $x_c=1/2$, the upper bound of LP relaxation.
}
 \label{fig4}
\end{center}
\end{figure}

Here, we present an example of the graph ensemble showing $c^{\mathrm{LR}}<c^{\mathrm{LP}}<c^{\mathrm{IP}}$.
\Fref{fig5} shows $x_c(c)$ with $\gamma=2.56$.
The linear stability of the RS solution in the IP-limit breaks at $c^{\mathrm{IP}}\simeq 7.133$.
$x_c^{\mathrm{IP}}$ is not smooth near $c=7.1$ because the lower degree bound increases.
It apparently exceeds a half suggesting the failure of LP relaxation.
The RS solution merges to that in the IP-limit but splits at $c^{\mathrm{LP}}\simeq 7.07$.
Equation \eref{eq_m6} has two stable
solutions of $(X,Y)=(1,0)$ and $(x,x)$ ($0<x<1$) if $c^{\mathrm{LP}}<c<c^{\mathrm{IP}}$.
The solution $(X,Y)=(1,0)$ gives the lower free-energy resulting in $x_c^{\mathrm{LP}}(c)=1/2$.
Consequently, $x_c^{\mathrm{LP}}(c)$ is bent at $c=c^{\mathrm{LP}}<c^{\mathrm{IP}}$, which is an example of case (iii) in \sref{sec_dif}.

\begin{figure}%[!htb]
\begin{center}
 \includegraphics[trim=0 0 0 0,width=0.8\linewidth]{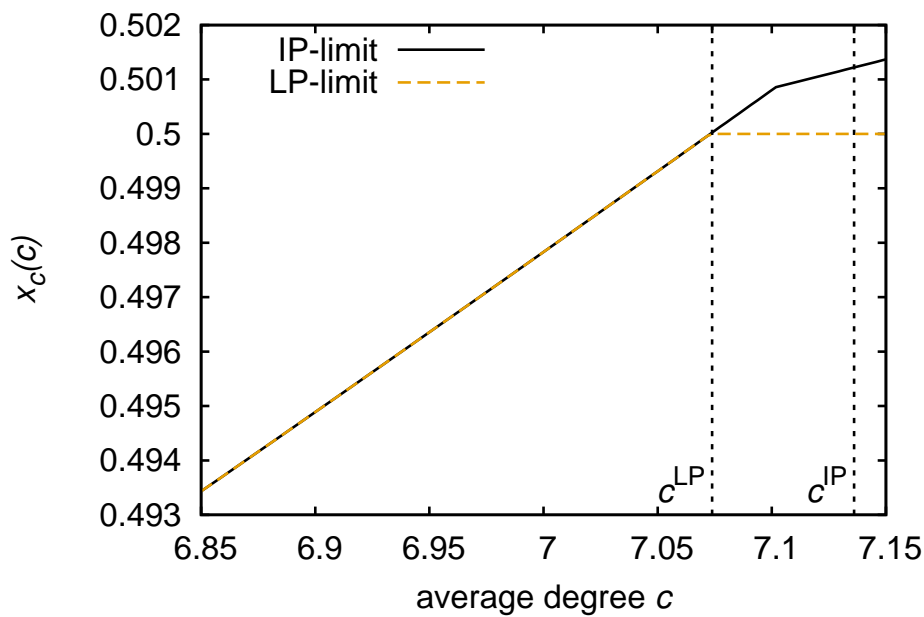}
 \caption{(Color online) Average optimal value $x_c(c)$ as a function of the average degree $c$ in a case in which $\gamma=2.56$.
 Solid and dashed lines respectively represent the RS solutions in the IP-limit and LP-limit.
 The vertical dashed lines shows $c^{\mathrm{LP}}$ and $c^{\mathrm{IP}}$. 
}
 \label{fig5}
\end{center}
\end{figure}

Thereby, we obtain examples for all cases discussed in \sref{sec_dif}.
By introducing random scale-free networks with continuous average degree and the lower degree bound,
  it is apparent that case (ii) emerges even in inhomogeneous network ensembles.
In this case, LP relaxation and the loopy BP work better than LR on random graphs with a finite range of average degree.
We also demonstrate an example of case (iii).
In this case, the loopy BP possibly works well even if LR and LP cannot work because of their characteristics.
Our example also shows that the power-law behavior of degree distribution is necessary for case (iii).

\section{Summary and discussions}

As described in this paper, we evaluate the typical behavior of approximation algorithms for min-VC using some theoretical analyses.
Instead of the conventional homogeneous random graphs, the typical performance over a graph ensemble with an arbitrary degree distribution
is studied. 
Convex optimization theory reveals that LP always solves the original min-VCs exactly if LR finds true optimal solutions.
We also use the generating function method for LR and the RS
cavity method for LP and BP to estimate the typical-performance thresholds.
As a result, we clarify that, in some cases, three algorithms have different thresholds above which they fail to approximate
the problem with good accuracy.
If the fraction of vertices with degree one is not zero, i.e., LR can work, they have the same threshold
 for the RS--RSB phase transition.
Otherwise, LR cannot work at all, giving it the worst performance among the three approximations.
It is widely observed for random graphs with lower degree bound greater than one.
The phase in which only LR cannot approximate the problem typically has a finite region if the degree distribution follows a power law.
As the last case, we provide an example for which their thresholds
apparently differ from each other.
This unusual case occurs for the min-VCs on random graphs satisfying two conditions: $p_1=0$ and $x_c^{\mathrm{IP}}(c^{\mathrm{IP}})> 1/2$.

Considering that the min-VC is defined on a graph, hard problems have some graph structures that make problems difficult.
The LR core and LP core (parts of graphs with half-integral LP solutions) are its candidates.
It is denied, however, because of the existence of case (iii) in \sref{sec_dif}.
Moreover, in our power-law degree distribution model, the appropriate
parameter $\gamma$ for case (iii) is very limited, which indicates that the graph structures for which only BP works are
well affected not only by a scale-free property but also
by other conditions such as their lower degree bound.
It is a difficult but meaningful task for future work to characterize a structure for hard problems by graph invariants.
In the sense of the graph structure, graph properties such as a degree--degree correlation and clustering structure
 neglected in this paper must be examined.
These properties will be properly reflected in statistical--mechanical analyses using generalized cavity methods proposed
 in~\cite{vazquez2003computational,decelle2011asymptotic}.

Our theoretical analyses claim that LP and BP
 have the same threshold of typical performance 
 in the wide range of random graphs.
However, the threshold $c^{\mathrm{IP}}$ calculated using the RS ansatz is a necessary condition 
for a typical-performance threshold $c^{\mathrm{BP}}$ of BP because it fails to converge above the dynamical transition 
of the one-step replica symmetry breaking (1RSB) if it exists.
Although this dynamical phase exists in some randomized constraint satisfaction problems~\cite{Mezard2003},
it has not been discovered in ground states of the randomized min-VCs.
To investigate its existence, one must write down a functional equation based on the 1RSB ansatz~\cite{Zhang2009},
which is difficult to solve analytically for arbitrary random graphs unfortunately.
The LP or LR possibly show better typical performance than BP if the dynamical phase exists for some ensembles.
It is an interesting subject for future work to investigate the existence of the transition and typical performance of approximations for a certain ensemble.
In contrast, the equivalence of the typical performance of BP and LP 
 is shown mathematically using probabilistic analysis in a specified case~\cite{Sanghavi2005}.
In this sense, our analyses provide a general conjecture on the typical behavior of approximation algorithms.
It is interesting to extend probabilistic analysis of LP relaxation for the min-VC to a more general case.

As described in this paper, we combine several approaches to average-case analyses for approximation algorithms.
These analyses are based on their algorithmic properties.
Especially, the average-case analysis of LP relaxation (including the probabilistic analysis above)
 is based on half-integrality of LP-relaxed min-VCs.
Some numerical results suggest, however, that a relation of the
typical performance between LP and BP is expected for more general situations
 without the half-integrality property~\cite{takabe2016statistical}.
Establishing their general connection is important from the viewpoint of continuous relaxation for discrete optimizations.
Along with LP relaxation, the semidefinite programming relaxation for the strict quadratic programming problems is analyzed
 using the RS cavity method~\cite{javanmard2015phase}.
Statistical--mechanical analyses will be helpful to extend a probabilistic analysis such as~\cite{coja2006max}.
We hope that this paper stimulates further studies of the typical behavior of approximation algorithms and its connection to the spin-glass theory.

%\begin{acknowledgments}%
\ack
  ST thanks to T. Hasegawa for fruitful discussions related to complex networks.
 This research was supported by Grants-in-Aid for Scientific Research
 from MEXT, %Ministry of Education, Culture, Sports, Science and Technology,
  Japan (Nos. 22340109, 25610102, and 25120010), and for JSPS fellows (No. 15J09001).
%\end{acknowledgments}

\appendix
\section{RS cavity analysis of the min-VC and its LP relaxation}
 Here we describe the mean-field analysis for the LP--IP model in
 detail. 
 For simplicity of representation, a spin variable $\sigma_i=1-x_i$ is used instead of $x_i$ itself.
 The LP--IP model on graph $G=(V,E)$ is described by the Hamiltonian
\begin{equation}
\mathcal{H}_r(\bm{\sigma})=-\sum_{i=1}^N \sigma_i+\mu^{r-1}\sum_{i}\delta_{\sigma_i,1/2}, \label{eq_app1}
\end{equation}
 where $\bm{\sigma}=\{\sigma_i\}=\{0,1/2,1\}^N$ and $\sigma_i=1$ corresponds to uncovered state of variable node $i$.
The grand canonical partition function is then defined as
\begin{equation}
\Xi=\sum_{\bm{\sigma}}\exp(-\mu\mathcal{H}_r(\bm{\sigma}))\prod_{(i,j)\in E}\theta(1-\sigma_i-\sigma_j). \label{eq_m2}
\end{equation}

 First, we obtain BP equations as described in \sref{sec_def}.
By Bethe--Peierls approximation, the probability of $\sigma_i=\sigma$ is
\begin{equation}
P_i(\sigma)\simeq \frac{1}{Z_i}\exp(\mu\sigma-\mu^{r}\delta_{\sigma,1/2})\prod_{j\in\partial i}P_{j\rightarrow i}(\sigma_j)
,\label{eq_ap1}
\end{equation}
 where $P_{j\rightarrow i}(\sigma)$ is the probability of $\sigma_j=\sigma$ on a cavity graph $G\backslash i$.
 The probability $P_{j\rightarrow i}(\sigma_j)$ is regarded as a message on
 the graph and satisfies the following recursive relation as
\begin{equation}
P_{i\rightarrow j}(\sigma_i)\simeq \frac{1}{Z_{i\rightarrow j}}\exp(\mu \sigma_i-\mu^r \delta_{\sigma_i,1/2})
\prod_{k\in\partial i\backslash j}\sum_{\sigma_k}P_{k\rightarrow i}(\sigma_k)\theta(1-\sigma_i-\sigma_k). \label{eq_ap2}
\end{equation}
 By substituting a spin value, we obtain
\begin{eqnarray}
P_{i\rightarrow j}(1)&\simeq \frac{1}{Z_{i\rightarrow j}}e^\mu
\prod_{k\in\partial i\backslash j}\left(1-P_{k\rightarrow i}(1)-P_{k\rightarrow i}\left(\frac{1}{2}\right)\right), \nonumber\\
P_{i\rightarrow j}\left(\frac{1}{2}\right)&\simeq \frac{1}{Z_{i\rightarrow j}}\exp\left(\frac{\mu}{2}-\mu^r\right)
\prod_{k\in\partial i\backslash j}\left(1-P_{k\rightarrow i}(1)\right), \nonumber\\
P_{i\rightarrow j}(0)&\simeq \frac{1}{Z_{i\rightarrow j}}.  \label{eq_ap4}
\end{eqnarray}
 
 As we take the $\mu\rightarrow \infty$ limit, we rescale the messages
 by introducing effective fields $\nu_{i\rightarrow a}$ and
$\xi_{i\rightarrow a}$ defined as
\begin{equation}
P_{i\rightarrow j}(1)\equiv \frac{e^{\mu\xi_{i\rightarrow a}}}{1+e^{\mu\xi_{i\rightarrow a}}+e^{\mu\nu_{i\rightarrow a}}},\quad
P_{i\rightarrow j}\left(\frac{1}{2}\right)\equiv \frac{e^{\mu\nu_{i\rightarrow a}}}{1+e^{\mu\xi_{i\rightarrow a}}+e^{\mu\nu_{i\rightarrow a}}}.
  \label{eq_ap4b}
\end{equation}
 Eq.~(\ref{eq_ap4b}) enables us to write down BP equations for these
 fields as 
\begin{eqnarray}
\xi_{i\rightarrow j}&=1-\frac{1}{\mu}\sum_{k\in\partial i\backslash j}
\ln\left[1+e^{\mu\xi_{k\rightarrow i}}+e^{\mu\nu_{k\rightarrow i}}\right],\nonumber\\
\nu_{i\rightarrow j}&=\frac{1}{2}-\mu^{r-1}+\frac{1}{\mu}\sum_{k\in\partial i\backslash j}
\ln\left[\frac{1+e^{\mu\nu_{k\rightarrow i}}}{1+e^{\mu\xi_{k\rightarrow i}}+e^{\mu\nu_{k\rightarrow i}}}\right]
.  \label{eq_ap5}
\end{eqnarray}

 We then consider a graph ensemble for which the degree distribution of variable nodes is $p_k$ ($k\ge 0$).
 Let $P(\xi,\nu)$ be the frequency distribution of a set of fields $(\xi,\nu)$.
 From Eq.~(\ref{eq_ap5}), we find a self-consistent equation of ${P}$ as
\begin{eqnarray}
{P}(\xi,\nu)&=\sum_{k=0}^\infty\frac{kp_k}{c}\int\prod_{i=1}^{k-1}d{P}
\left(\xi^{(i)},\nu^{(i)}\right)\nonumber\\
&\times\delta\left(\xi-1+\frac{1}{\mu}\sum_{i}\ln\left[1+e^{\mu\xi^{(i)}}+e^{\mu\nu^{(i)}}\right]\right)\nonumber\\
&\times\delta\left(\nu-\frac{1}{2}+\mu^{r-1}-\frac{1}{\mu}\sum_{i}
\ln\left[\frac{1+e^{\mu\nu^{(i)}}}{1+e^{\mu\xi^{(i)}}+e^{\mu\nu^{(i)}}}\right]\right)
.
  \label{eq_ap6}
\end{eqnarray}

 The first limit is the IP--limit with $r>1$.
 In this limit, the cavity field $\nu$ negatively diverges as $\mu\rightarrow \infty$, which corresponds to the fact that its ground states consist of no half-integral spin values.
 Let $X$ be the probability that $\xi$ is positive in this limit.
 From \eref{eq_ap6}, we obtain an equation of $X$, which reads
\begin{equation}
 X=\sum_k\frac{kp_k}{c}(1-X)^{k-1}. \label{eq_ap7}
\end{equation}
Using the solution, the minimum cover ratio is then represented as
\begin{equation}
 x_c^{\mathrm{IP}}(c)=1-\frac{1}{2}\sum_kp_k\left\{2(1-X)^{k}+kX(1-X)^{k-1}\right\}. \label{eq_ap8}
\end{equation} 
  
 Next, we specifically examine the LP-limit with $0<r<1$.
 As described in~\cite{takabe2016statistical}, numerical solutions of \eref{eq_ap6} have a support around some lattice points.
 We therefore apply a discretized ansatz for cavity fields:
 weights around $(\xi,\nu)=(1,1/2)$ and $(1/2,1/2)$ are defined respectively as $r_1$ and $r_2$.
 We also set the marginal probability for which $\xi\le 0$ and $\nu=1/2$ to $r_3$.

 By substituting $X=r_1+r_2+r_3,Y=r_1(\le X)$, finally we obtain the
 recursive relations as  
\begin{equation}
 X=\sum_k\frac{kp_k}{c}(1-Y)^{k-1},\quad Y=\sum_k\frac{kp_k}{c}(1-X)^{k-1}. \label{eq_ap9}
\end{equation}
 This is a recursive relation denoted in \sref{sec_lpip}.

 Considering a small penalty $\mu^{r-1}$, it is straightforward to
 obtain a marginal distribution of each spin via \eref{eq_ap1}, which reads
\begin{eqnarray}
\fl \mbox{Pr}(\sigma_i=1)&=\sum_{k}p_k\left\{(1-X)^k+k(X-Y)(1-X)^{k-1}\right\},\nonumber\\
\fl \mbox{Pr}\left(\sigma_i=\frac{1}{2}\right)&
=\sum_{k}p_k\left\{(1-Y)^k-(1-X)^k-k(X-Y)(1-X)^{k-1}\right\},\nonumber\\
\fl \mbox{Pr}\left(\sigma_i=1\mbox{ w.p. } \frac{1}{2} \mbox{ and }0 \mbox{ w.p. }\frac{1}{2}\right)&=\sum_{k}kp_kY(1-X)^{k-1}.
  \label{eq_ap10}
\end{eqnarray}
 These lead to the average minimum cover ratio
\begin{equation}
 x_c^{\mathrm{LP}}(c)=1-\frac{1}{2}\sum_kp_k\left\{(1-Y)^k+(1-X)^{k}+kX(1-X)^{k-1}\right\}, \label{eq_ap11}
\end{equation}
 and the average ratio of half-integral spins
\begin{eqnarray}
 p_h(c)&=\sum_{k}p_k\left\{(1-Y)^k-(1-X)^k-k(X-Y)(1-X)^{k-1}\right\}. \label{eq_ap12}
\end{eqnarray}

\section*{References}
\bibliographystyle{plain}
\bibliography{VCapp}

\end{document}